\documentclass[12pt]{amsart}

\usepackage{geometry}
\usepackage{amssymb}
\usepackage{amsmath}
\usepackage{graphicx}
\usepackage{enumerate}
\usepackage{multirow}
\usepackage{amsmath,color}
\usepackage{hyperref}
\usepackage[capitalize]{cleveref}
\usepackage{url}
\usepackage[colorinlistoftodos]{todonotes}
\usepackage{mathrsfs}
\usepackage{float}

\newcommand{\AG}{$\mathrm{AG}$}
\newcommand{\PG}{$\mathrm{PG}$}
\newcommand{\MA}{4-local arc}

\DeclareMathOperator{\F}{\mathbb{F}} 
 
\DeclareMathOperator{\MOLS}{\mathbb{L}} 
\DeclareMathOperator{\LS}{L} 
 
\DeclareMathOperator{\li}{\mathscr{L}}

\newtheorem{theorem}{Theorem}[section]

\newtheorem{lemma}[theorem]{Lemma}
\newtheorem{proposition}[theorem]{Proposition}
\newtheorem{definition}[theorem]{Definition}
\newtheorem{example}[theorem]{Example}
\numberwithin{equation}{section}
\theoremstyle{definition}
\newtheorem{remark}{Remark}[section]

\newcommand{\switchDisplayOrText}[2]{\mathchoice{#1}{#2}{#2}{#2}}
\NewDocumentCommand{\SmartSymbol}{o m m m}{
	\IfValueTF{#1}{
		\csname #1l\endcsname #2 #4 \csname #1r\endcsname #3
	}{
		\switchDisplayOrText{
			\left #2 #4 \right #3
		}{
			\if.#2\else #2\fi
			#4
			\if.#3\else #3\fi
		}
	}
}
\NewDocumentCommand{\card}{o m}{\SmartSymbol[#1]\lvert\rvert{#2}}

\begin{document}
	
\title[Singleton-optimal locally repairable codes]{The constructions of Singleton-optimal locally repairable codes with minimum distance 6 and locality 3}
\author{Yanzhen Xiong$^{a}$}

\author{Jianbing Lu$^{a,*}$}

\renewcommand{\thefootnote}{a}
\footnotetext{College of Science, National University of Defense Technology, Changsha 410073, China}

\renewcommand{\thefootnote}{*}
\footnotetext{Corresponding author \\Email addresses: xiongyanzhen@nudt.edu.cn, jianbinglu@nudt.edu.cn}

\begin{abstract}
In this paper, we present new  constructions of $q$-ary Singleton-optimal locally repairable codes (LRCs) with minimum distance $d=6$ and locality $r=3$, based on combinatorial structures from finite geometry. By exploiting the well-known correspondence between a complete set of mutually orthogonal Latin squares (MOLS) of order $q$ and the affine plane $\mathrm{AG}(2,q)$, We systematically construct families of disjoint 4-arcs in the projective plane $\mathrm{PG}(2,q)$, such that the union of any two distinct 4-arcs forms an 8-arc. These 4-arcs form what we call 4-local arcs, and their existence is equivalent to that of the desired codes. For any prime power $q\ge 7$, our construction yields codes  of length $n = 2q$, $2q-2$, or $2q-6$ depending on whether $q$ is even, $q\equiv 3 \pmod{4}$, or $q\equiv 1 \pmod{4}$, respectively. 
\newline
		
		\noindent\text{Keywords:} Locally repairable code; Singleton-type bound; Affine plane; Projective plane; Arc; Latin square
		
		\noindent\text{Mathematics Subject Classification (2020)}: 94B05 51E20 05B15 
\end{abstract}

\maketitle

\section{Introduction}
Let $\mathbb{F}_{q}$ be the finite field with $q$ elements, where $q$ is a prime power.  An  $[n,k]_q$ \emph{linear code} $\mathcal{C}$ is a $k$-dimensional subspace of $\mathbb{F}_q^n$. The  parameter $n$ is called the  \emph{length} of  $\mathcal{C}$. The vectors in  $\mathcal{C}$ are called \emph{codewords}. Let $c\in\mathcal{C}$ be a codeword.   The \emph{support} of  $c=(c_1,c_2,\dots, c_n)$ is defined  as $\operatorname{supp}(c)=\{i\in[n]:c_i\neq0\}$, where $[n]$  denotes the set $\{1,2,\dots,n\}$. The \emph{weight} of $c$ is the cardinality of its support, i.e., $\operatorname{wt}(c)=\#\operatorname{supp}(c)$.  An $[n,k,d]_q$ linear code $\mathcal{C}$  is  an  $[n,k]_q$ linear code with \emph{minimum distance} $d$, defined as $d=\min_{0\neq c\in\mathcal{C}} wt(c)$. Usually, if the context is clear, the subscript $q$ is omitted  by convention in the sequel. The $i$-th symbol of $\mathcal{C}$    is said to have \emph{locality} $r$ if there exists a subset $R_i\subset [n]$ (called a \emph{repair group} for the symbol) containing $i$, with size $|R_i| \leq r+1$, such that $c_i$ can be expressed as a linear combination of the other symbols $\{c_j\}_{j\in R_i\setminus\{i\}}$.  A linear code $\mathcal{C}$ is called a \emph{locally repairable code} (LRC) with locality $r$ if each of its symbols has locality $r$. This concept was first introduced by Gopalan et al. \cite{gopalan2012}. We denote such a code as an $(n,k,d;r)$-LRC. An LRC is said to have disjoint local repair groups if the set of coordinates $[n]$ is exactly the disjoint union of some repair groups, i.e., $[n]=\bigcup_i R_i$, and $|R_i|=r+1$.

The Singleton bound \cite{Singleton1964} establishes the inequality $d \leq n - k + 1$ for any $[n,k,d]$ code. For an LRC with locality $r$, a refined Singleton-type bound was derived in \cite{gopalan2012}:
\begin{equation}\label{Singleton}
d\leq n-k-\left\lceil \frac{k}{r} \right\rceil+2  
\end{equation}
An $(n,k,d;r)$-LRC achieving equality in \eqref{Singleton}  is called  \emph{Singleton-optimal}.  There has been extensive work  on   constructing  Singleton-optimal LRCs. In 2014, Tamo and Barg \cite{tamo2014} first gave a construction with length $n$  up to  the alphabet size $q$. This was later improved to $q+1$ by Chen et al. using cyclic codes \cite{Chen2017} or constacyclic codes \cite{Chen2019}. In \cite{jin2019}, Jin et al. also  achieved length $q+1$ via the automorphism groups of rational function fields. For certain localities $r \in \{2,3,5,7,11,23\}$ with the additional condition $(r+1) \mid d$, constructions based on elliptic curves achieve lengths up to $q + 2\sqrt{q}$ \cite{li2018}. For  minimum distance $d=3$ or $4$, cyclic Singleton-optimal LRCs with unbounded length were proposed by Luo et al. \cite{Luo2018}. In  \cite{Jin2019}, Jin gave an explicit construction of $q$-ary  Singleton-optimal LRCs of length $\Omega_r(q^2)$ for $d=5,6$ and $r\geq4$. Xing and Yuan \cite{Xing2022} later generalized Jin’s results  to all $d \ge 7$ using hypergraph theory.

Certain Singleton-optimal LRCs   can be characterized by specific geometric objects in projective plane  $\mathrm{PG}(2,q)$,  (see, e.g., \cite{Chen2021,Fang2020,fang2024,Tao2025}). An \emph{$n$-arc} in $\mathrm{PG}(2,q)$ is a set of $n$ points such that  every  line intersects  the arc in at most $2$ points. It is known that $n\leq q+2$,  with equality holding if and only if $q$ is even. In \cite{fang2024}, the authors  provided sufficient and necessary conditions for the existence of  Singleton-optimal $(n,k,d;r)$-LRCs with $d=6,r=3$, and disjoint repair groups.
\begin{lemma}\label{LRCs}
 \cite[Theorem 6]{fang2024}   Suppose $4\mid n$. Then, there exists a $q$-ary Singleton-optimal LRC of length $n$, minimum distance $d=6$ and locality $r=3$ with disjoint repair groups if and only if there exist $n/4$ sets $\mathcal{S}_1, \mathcal{S}_2,\dots, \mathcal{S}_{n/4}$, each of which consists of $4$ points in $\mathrm{PG}(2,q)$, such that for any $1\leq i\neq j\leq n/4$,
\begin{enumerate}
	\item[(i)] $\mathcal{S}_i\cap \mathcal{S}_j=\emptyset$;
	\item[(ii)] no 3 points in $\mathcal{S}_i\cup \mathcal{S}_j$ lie on a common line.
\end{enumerate} 
\end{lemma}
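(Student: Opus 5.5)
We argue through the parity-check matrix. Fix $n=4m$. Suppose $\mathcal{C}$ is an $(n,k,6;3)$-LRC with disjoint repair groups $R_1,\dots,R_m$, each of size $4$. Each group carries a local parity check whose support is exactly that group; these are the rows $\mathbf{1}_{R_i}$, after rescaling coordinates. Since the local checks together impose $m$ independent constraints, $k\le 3m$ and $\lceil k/3\rceil\le m$. Singleton-optimality with $d=6$ forces $n-k=\lceil k/3\rceil+4$.

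\textbf{Setting up the parity-check matrix.} The natural optimal dimension is $k=3m-3$, so that $n-k=m+3$ and $\lceil k/3\rceil=m-1$. I will therefore take a parity-check matrix consisting of the $m$ block-indicator rows together with $3$ further global rows. This shape is forced by the bound: the local rows span an $m$-dimensional space, and the remaining redundancy is $3$.

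\textbf{Restricting to a group.} On group $R_i$, after row reduction modulo the local checks, each coordinate $j\in R_i$ contributes a column whose global part is a vector $v_j\in\mathbb{F}_q^3$. The vectors in one group are determined only up to adding a common vector, because of the all-ones local row. Projectively, each column becomes a point $P_j\in\mathrm{PG}(2,q)$, and I set $\mathcal{S}_i=\{P_j: j\in R_i\}$. I will normalise the column scalings so that the four columns of a group are genuinely four points.

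\textbf{Translating $d=6$.} The condition $d=6$ says every $5$ columns of the full parity-check matrix are independent. I will analyse the possible supports of a dependency of size at most $5$. A support meeting a group in a single coordinate is impossible, since the local row would force that coefficient to be zero. So a small codeword of the dual relation has support split as $2+2$, $2+3$, $3$, $4$, or $5$ inside one group.

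Analysing these patterns should give the following.

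\begin{enumerate}
\item[(a)] Dependencies inside one group, of size up to $4$, correspond to three points of $\mathcal{S}_i$ being collinear, or to coincident points.
\item[(b)] Patterns of type $2+2$ and $2+3$ across two groups give a relation among the corresponding points with coefficients summing to zero in each group. This is equivalent to three or four of the points in $\mathcal{S}_i\cup\mathcal{S}_j$ lying on a line, or to points coinciding.
\end{enumerate}

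Conversely, given sets satisfying (i) and (ii), I will build the matrix with rows $\mathbf{1}_{R_i}$ and columns given by representatives of the points. The same case analysis then shows every $5$ columns are independent, so $d\ge6$, and the bound yields equality. The locality is $3$ via the local rows.

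\textbf{Main obstacle.} The delicate step is the normalisation and the $2+3$ case. I need to show that a weight-5 dependency whose two-element part lies in $\mathcal{S}_i$ and three-element part in $\mathcal{S}_j$ exists exactly when some three points of $\mathcal{S}_i\cup\mathcal{S}_j$ are collinear. This must use the freedom of choosing representatives, so that the group-sum-zero constraint can be absorbed.

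I also need that disjointness (i) is really forced. A shared point between two groups gives a weight-4 dependency of type $2+2$.

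Finally, I must confirm that an optimal code must have the assumed dimension $k=3m-3$. The other possibilities with $d=6$, such as $k\equiv$ other residues mod $3$, should violate the bound. Failing that, they reduce to this case by shortening or puncturing.
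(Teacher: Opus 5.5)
There is a genuine gap at the first geometric step. You identify a repair group with the projective points of its columns, $\mathcal{S}_i=\{\langle v_j\rangle : j\in R_i\}$. This is the wrong correspondence, and your claims (a) and (b) are false for it.

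Write $x_a\in\mathbb{F}_q^3$ for the global part of column $a$ (your $v_a$). As you observe, the $x_a$ of a group are only defined up to a common translation $x_a\mapsto x_a+c_i$; in the form of Lemma~\ref{matrix} one of them is even $\mathbf{0}$. But the collinearities of the points $\langle x_a+c_i\rangle$ depend on $c_i$. For affinely independent $x_a,x_b,x_{a'}$, the points $\langle x_a+c_i\rangle,\langle x_b+c_i\rangle,\langle x_{a'}+c_i\rangle$ are collinear iff $-c_i$ lies on the affine plane through $x_a,x_b,x_{a'}$. That is a property of the normalisation, not of the code.

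The converse construction fails too. A group with global parts $e_1,e_2,e_3,e_1+e_2-e_3$ gives the $4$-arc $\{\langle e_1\rangle,\langle e_2\rangle,\langle e_3\rangle,\langle e_1+e_2-e_3\rangle\}$. Yet its four columns satisfy the relation with coefficients $(1,1,-1,-1)$, which sum to zero, so the code has a codeword of weight $4$. Thus ``columns given by representatives of the points'' does not yield $d\ge 6$.

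Likewise, a $2+2$ relation $\lambda(x_a-x_{a'})=\mu(x_b-x_{b'})$ is not a collinearity condition. The two secants always meet; the relation only constrains where they meet, relative to the chosen representatives. A point shared by two groups produces no dependency at all, since two columns from different groups with proportional global parts are independent. So disjointness is not forced in your picture. The zero-sum constraint from the local rows cannot be ``absorbed'' by choosing representatives.

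The missing idea is that the invariant data of a group are the \emph{differences} $x_a-x_{a'}$, and that the lemma is the projective dual of the condition they give. Your support analysis is the right tool. No dependency inside group $i$ means $x_1,\dots,x_4$ are affinely independent. Then a $2+2$ or $2+3$ dependency between groups $i\neq j$ exists iff some direction $\langle x_a-x_{a'}\rangle$ of one group lies on one of the four lines $\langle x_b-x_d,\,x_c-x_d\rangle$ of the other.

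In the normal form $x_1=\mathbf{0}$, $(x_2,x_3,x_4)=(\boldsymbol{u}_i,\boldsymbol{v}_i,\boldsymbol{w}_i)$, these four lines are exactly $L_{1,i},\dots,L_{4,i}$ of Section~\ref{disjoint}. The six directions $\langle x_a-x_{a'}\rangle$ are their pairwise intersections. Hence $d\ge 6$ iff three conditions hold: each $\mathcal{B}_i$ consists of four lines in general position, the $\mathcal{B}_i$ are pairwise disjoint, and no three lines of $\mathcal{B}_i\cup\mathcal{B}_j$ are concurrent. This is \cite[Proposition 1, Theorem 3]{fang2024}, and the lemma is its dual. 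So $\mathcal{S}_i$ must be the set of points dual to the lines of $\mathcal{B}_i$, not the column points.

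For the converse, dualize each $\mathcal{S}_i$ to four lines in general position. Realize them as $\langle u,v\rangle,\langle v,w\rangle,\langle w,u\rangle,\langle u-v,v-w\rangle$ by a linear map, using the transitivity of $\mathrm{PGL}(3,q)$ on ordered quadruples of lines in general position.

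Finally, the dimension issue you left open needs no shortening or puncturing. The equation $4m-k=\lceil k/3\rceil+4$ forces $k\equiv 0\pmod 3$, and then $k=3m-3$.
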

By \cref{LRCs}, the existence of such a code is equivalent to the existence of a 4-local arc in $\mathrm{PG}(2,q)$; that is, a collection of $n/4$ disjoint $4$-arcs $\mathcal{S}_i$ such that the union of any two distinct $4$-arcs forms an $8$-arc.

A Latin square of order $n$ is an $n \times n$ matrix filled with $n$ distinct symbols, each occurring exactly once in each row and once in each column \cite{Dougherty20,DK15}.
Latin squares originate from Leonhard Euler's  \emph{``Thirty-Six Officers Problem''}, which is equivalent to the search for a pair of orthogonal Latin squares of order $6$. A set of Latin squares is called a set of \emph{mutually orthogonal Latin squares} (MOLS) if any two squares in the set are orthogonal; that is, when superimposed, every ordered pair of symbols appears exactly once. This concept is fundamental to many applications in design theory \cite{Bailey04}.
The connection between Latin squares and coding theory is fruitful \cite{DK91}. In particular, they serve as powerful tools for constructing low-density parity-check (LDPC) codes \cite{LM07,Vasic2002,ZHLAB10}, LRCs \cite{YER22}, error-distributing codes \cite{GP64}, and developing efficient secret-sharing schemes \cite{DK91}.

In this paper, we leverage this correspondence to provide new  explicit constructions of Singleton-optimal $(n,k,d=6,r=3)$-LRCs. Our approach starts from a complete set of MOLS of order $q$, from which we obtain the associated affine plane $\mathrm{AG}(2,q)$. We then systematically produce families of disjoint $4$-arcs in the projective plane $\mathrm{PG}(2,q)$.  The key combinatorial objects in our method are transversals of specially structured Latin squares—closely related to the addition table of $\mathbb{F}_q$—that directly provide point configurations in $\mathrm{AG}(2,q)$ satisfying the required "no‑three‑collinear" condition both inside each $4$-arc and between different 4-arcs. These collections of $4$-arcs precisely form $4$-local arcs (Definition \ref{4-local arc}) whose existence, by Lemma \ref{LRCs}, is equivalent to that of our target codes.  This approach offers a unified combinatorial framework for constructing these optimal codes, both for even and odd prime powers $q$. The following is our main result. 
\begin{theorem}\label{main}
    For any prime power $q\geq 4$, there exists a $q$-ary Singleton-optimal LRC with minimum distance $d=6$, locality $r=3$, and disjoint repair groups and with length
     \[
        n = \begin{cases}
            2q, & \text{if \ $q$ is even}; \\[4pt]
            2q-2, & \text{if\  $q \equiv 3 \pmod{4}$}; \\[4pt]
            2q-6, & \text{if \ $q \equiv 1 \pmod{4}$}.
        \end{cases}
    \]
\end{theorem}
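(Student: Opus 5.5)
By \cref{LRCs}, \cref{main} reduces to a purely geometric statement. For each prime power $q$ we must exhibit $m=n/4$ pairwise disjoint $4$-sets $\mathcal S_1,\dots,\mathcal S_m$ in $\mathrm{PG}(2,q)$ such that any union $\mathcal S_i\cup\mathcal S_j$ with $i\ne j$ is an $8$-arc. Here $m=q/2$, $(q-1)/2$ or $(q-3)/2$ in the three cases. Note that $n$ is divisible by $4$ in each case: $q$ even forces $4\mid q$ when $q\ge 4$, and $q\equiv 3$, resp.\ $1 \pmod 4$, gives $2q-2\equiv 4$, resp.\ $2q-6\equiv 0\pmod 8$. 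I will work almost entirely inside $\mathrm{AG}(2,q)\subset\mathrm{PG}(2,q)$, possibly adding points on the line at infinity.

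I will use the standard complete set of MOLS of order $q$, namely $L_\lambda(x,y)=\lambda x+y$ for $\lambda\in\F_q^*$. Together with the row and column partitions, their symbol classes give all parallel classes of lines of $\mathrm{AG}(2,q)$. In this language, a set of cells is an arc exactly when no three of its cells share a row, a column, or a symbol in any $L_\lambda$. Equivalently, the three slopes determined by any three of the points are never all equal.

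The construction I would try pairs the field elements. For $q$ even, pair $x$ with $x+c$ for a fixed $c\ne0$. For $q$ odd, pair $x$ with $-x$, which makes the quadratic character of $-1$, and hence the condition $q \bmod 4$, appear. From each pair $\{a,a'\}$ I would take the four cells
\[
\mathcal S_a=\{(a,f(a)),(a',f(a')),(a,g(a)),(a',g(a'))\},
\]
where $f,g$ are chosen so that each $\mathcal S_a$ is a transversal-like configuration of the addition table, for instance $f(x)=x^2$ and $g(x)=x^2+\delta$. In the odd case I would discard the pairs that produce degeneracies: the pair containing $0$, and, when $-1$ is a square, one further pair. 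This accounts for the loss of $2$, resp.\ $6$, points relative to $2q$.

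The verification then proceeds in three steps.
\begin{enumerate}
\item Disjointness of the $\mathcal S_a$, which is immediate from the choice of pairs.
\item Each $\mathcal S_a$ is a $4$-arc, which amounts to checking finitely many $2\times2$ determinants.
\item For $a\ne b$, no three points of $\mathcal S_a\cup\mathcal S_b$ are collinear.
\end{enumerate}
For step (iii) I split by how the three points are distributed: two from $\mathcal S_a$ and one from $\mathcal S_b$, or the reverse. The collinearity determinant factors, after using the pairing symmetry, into a product of terms such as $(a-b)$, $(a+b)$, $(a^2-b^2\pm\delta)$. I then choose $\delta$, e.g.\ a nonsquare in the odd case or an element of absolute trace $1$ in the even case, so that none of these factors can vanish for distinct admissible pairs.

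The main obstacle is step (iii). The number of mixed cases is large, and the factors that can vanish depend on the arithmetic of $\F_q$. I would first try to compress the case analysis using the symmetries $x\mapsto -x$ (or $x\mapsto x+c$) and $y\mapsto y+\delta$, which preserve the whole family. The remaining equations I would reduce to whether a specific element, $-1$ or $-\delta$, is a square. If a few pairs still produce collinear triples, they are exactly the ones to be removed, and counting them should give the lengths $2q$, $2q-2$ and $2q-6$ stated in \cref{main}. If the quadratic choice of $f$ fails, the fallback is to replace $g$ by a point on the line at infinity or to use a second conic, and to redo the same determinant analysis.
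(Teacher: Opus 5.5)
\textbf{There is a genuine gap.} You never settle on a construction that survives step (iii), and the candidate you do write down fails in a way that deleting $O(1)$ pairs cannot repair.

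\emph{Odd case.} Your blocks are the rectangles $\mathcal S_a=\{\pm a\}\times\{a^2,a^2+\delta\}$. Two blocks share a horizontal line, giving four collinear points, as soon as $b^2-a^2=\pm\delta$. This is your factor $a^2-b^2\pm\delta$, and no choice of $\delta$ (nonsquare or otherwise) removes it. For $q\ge 7$, every nonzero element of $\mathbb{F}_q$ is a difference of two nonzero squares, in roughly $q/4$ ways. So linearly many pairs of blocks collide, and you would have to discard a linear number of blocks. For $q=7$ the height sets $\{1,1+\delta\},\{4,4+\delta\},\{2,2+\delta\}$ are never pairwise disjoint, so even $n=12$ is out of reach.

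\emph{Even case.} $\mathcal S_a$ is a parallelogram whose diagonals have slope $s=c+\delta/c$. For $b=a+s$, the point $(b,b^2)$ lies on the line through $(a,a^2)$ and $(a+c,a^2+c^2+\delta)$. This forces $\delta=c^2$; a trace-$1$ condition alone is not enough. Then $\mathcal S_a=\{a,a+c\}\times\{a^2,a^2+c^2\}$ has both diagonals of slope $c$. Taking $b=a+c\zeta$ with $\zeta^2+\zeta+1=0$ puts $(b,b^2)$ on the line through $(a,a^2+c^2)$ and $(a+c,a^2)$ whenever $\zeta\in\mathbb{F}_q$, i.e.\ whenever $q$ is an even power of $2$. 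So your count (drop the class of $0$ and one more pair to land exactly on $2q-2$ and $2q-6$) has no basis, and the listed fallbacks are not a proof.

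\textbf{The missing idea} is to decouple the two coordinates of a base point and let a Latin-square transversal control all three relevant directions at once.
\begin{itemize}
\item The paper uses rectangles $\{i,\tau(i)\}\times\{j,\tau(j)\}$, with the same involution $\tau$ on both axes: $\tau(x)=x+\omega^{r-1}$ for $q=2^r$, and $\tau(x)=-x$ for odd $q$.
\item Both $i$ and $j$ are taken from a half that $\tau$ swaps with its complement: the additive subgroup $H=\langle 1,\omega,\dots,\omega^{r-2}\rangle$ in the even case, and $\{1,\omega,\dots,\omega^{(q-3)/2}\}$ in the odd case.
\item The six lines of a block are then two verticals, two horizontals, and two diagonals that depend only on $i+j$, resp.\ on $j/i$ modulo $\pm1$.
\item The cross-block condition therefore says exactly that the base points form a partial transversal: of the addition table of $H$, resp.\ of the cyclic Latin square of order $(q-1)/2$.
\end{itemize}
The lengths then come from transversal theory, not from ad hoc deletions. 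The value $2q-6$ arises because a cyclic square of even order only has partial transversals of length one less.

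\textbf{What survives of your attempt.} For $q\equiv 3\pmod 4$, the rectangles $\{\pm a\}\times\{\pm a^2\}$ do work, because $-1$ is a nonsquare. This is precisely the paper's transversal $\{(k,2k)\}$ in exponent coordinates. Your even construction also works when $r$ is odd and $\delta=c^2$.

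\textbf{Range of $q$.} The statement is false for $q=4$: $n=8$ would require an $8$-arc in $\mathrm{PG}(2,4)$, where arcs have at most $6$ points. The paper's construction breaks there as well, since the order-$2$ square $\LS^{(1)}$ has no transversal. It is also impossible for $q=5$, where $n=4<d$. Any correct argument must therefore assume $q\ge 7$, as the abstract does.
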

The paper is organized as follows. \cref{pre} reviews preliminaries on LRCs with disjoint repair groups, affine and projective planes, and Latin squares. Section \ref{s3} discusses the well-known correspondence between a complete set of MOLS and the affine plane $\mathrm{AG}(2,q)$. Section \ref{s4} presents our constructions of 4-local arcs in $\mathrm{PG}(2,q)$ for both even and odd prime powers, thereby proving  Theorem \ref{main}. Finally, Section \ref{conclusion} concludes the paper with some remarks and open problems.

\section{Preliminaries}\label{pre}
In this section, we  present  some preliminary results about LRCs, affine  and projective planes,  and Latin squares.
\subsection{LRCs with disjoint repair groups}\label{disjoint}

We begin by recalling the structure of the parity-check matrix for LRCs with disjoint repair groups. The following lemma, based on \cite{Hao2020}, gives a canonical form.
\begin{lemma}\label{matrix}
 Suppose that $\mathcal{C}$ is a $q$-ary $(n,k,d;r)$-LRC with disjoint local repair groups such that $(r+1)\mid n$. Then $\mathcal{C}$ has an equivalent parity-check matrix $H$ of the following form:
\begin{equation}\label{parity-matrix}
\setlength{\arraycolsep}{1.5pt}
\renewcommand{\arraystretch}{0.9}
H = \left(
\begin{array}{cccc|cccc|c|cccc}  
1&1&\cdots&1&0&0&\cdots&0&\cdots&0&0&\cdots&0\\
0&0&\cdots&0&1&1&\cdots&1&\cdots&0&0&\cdots&0\\
\vdots&\vdots&\ddots&\vdots&\vdots&\vdots&\ddots&\vdots&\ddots&\vdots&\vdots&\ddots&\vdots\\
0&0&\cdots&0&0&0&\cdots&0&\cdots&1&1&\cdots&1\\
\hline
\mathbf{0}&\boldsymbol{v}_1^{(1)}&\cdots&\boldsymbol{v}_r^{(1)}&\mathbf{0}&\boldsymbol{v}_1^{(2)}&\cdots&\boldsymbol{v}_r^{(2)}&\cdots&\mathbf{0}&\boldsymbol{v}_1^{(\ell)}&\cdots&\boldsymbol{v}_r^{(\ell)}
\end{array}
\right),
\end{equation}
where the upper part of $H$ contains $\ell=n/(r+1)$ locality rows and the lower part of $H$ contains $u=n-k-\ell$ rows. The bold-type letters $\boldsymbol{v}_j^{(i)}$, $i\in[\ell]$ and $j\in[r]$ are column vectors in $\mathbb{F}_q^u$, and $\mathbf{0}$ represents the all-zero column vector in $\mathbb{F}_q^u$.
\end{lemma}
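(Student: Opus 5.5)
The plan is to start from the standard structure theory of parity-check matrices of LRCs with disjoint repair groups, following \cite{Hao2020}. Let $\mathcal{C}$ be an $(n,k,d;r)$-LRC with disjoint repair groups $R_1,\dots,R_\ell$, where $\ell=n/(r+1)$ and $|R_i|=r+1$.

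First, permute coordinates so that $R_i$ occupies the $i$-th block of $r+1$ consecutive positions; this only replaces $\mathcal{C}$ by an equivalent code. By the definition of locality, for each $i$ some symbol of $R_i$ is a linear combination of the other $r$ symbols of $R_i$. Hence the dual code $\mathcal{C}^\perp$ contains a codeword $\boldsymbol{h}_i$ whose support lies in $R_i$ and is nonempty. I would argue that we may take $\operatorname{supp}(\boldsymbol{h}_i)=R_i$ exactly, since every coordinate of $R_i$ has its repair group equal to $R_i$.

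Next, multiply each coordinate by a suitable nonzero scalar, which is a monomial equivalence. This turns every nonzero entry of $\boldsymbol{h}_i$ into $1$. After this step the $\boldsymbol{h}_i$ are exactly the $\ell$ locality rows displayed in \eqref{parity-matrix}. They have disjoint supports, so they are linearly independent.

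Then extend $\{\boldsymbol{h}_1,\dots,\boldsymbol{h}_\ell\}$ to a basis of $\mathcal{C}^\perp$, whose dimension is $n-k$, by adding $u=n-k-\ell$ further rows. These rows form the lower block. To obtain the zero columns, perform row operations: for each block $i$, subtract from every lower row its entry in the first coordinate of block $i$ times $\boldsymbol{h}_i$. Since $\boldsymbol{h}_i$ has a $1$ in that coordinate and zeros outside block $i$, this clears the first column of each block in the lower part and leaves the other blocks unchanged. Row operations do not change the row space, so the result is still a parity-check matrix of the (equivalent) code. The remaining lower entries give the vectors $\boldsymbol{v}^{(i)}_j\in\F_q^u$.

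The one delicate step is making sure $\boldsymbol{h}_i$ has full support $R_i$. If $\boldsymbol{h}_i$ vanished on some coordinate of $R_i$, that coordinate's repair group would be a smaller set. In that case we would use the dual codewords from the repair equations of all symbols of $R_i$ and take a suitable linear combination. Over a field with $q\ge r+1$ elements, a generic combination has full support on $R_i$. Alternatively, one can simply adopt the standard convention of \cite{Hao2020} that each repair group comes from a single local parity check of full support.
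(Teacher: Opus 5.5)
The paper gives no proof of this lemma and simply imports it from \cite{Hao2020}; your argument is the standard derivation behind that citation and is correct. You permute so the $R_i$ are consecutive blocks, rescale columns so each local check becomes all-ones, extend the $\ell$ disjointly supported checks to a basis of $\mathcal{C}^\perp$, and subtract multiples of the locality rows to clear the first column of each block in the lower part.

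You are also right that full support is the one real issue. It does not follow from the paper's literal definition, since a repair group may carry zero coefficients and need only serve one symbol. So you must either adopt the convention that each $R_i$ is the support of a local parity check, or use your generic-combination argument. That argument needs both that every symbol of $R_i$ is repaired inside $R_i$ and that $q\ge r+1$. The condition $q\ge r+1$ cannot be dropped: the binary repetition code of length $3$ with $r=2$ has no full-support dual codeword. It does hold in the paper's application, where $r=3$ and $q\ge 4$.
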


For the special case $d=6$ and $r=3$, we adopt the following geometric interpretation of the parity-check matrix \eqref{parity-matrix}. Denote the three nonzero column vectors in the $i$-th repair group by
\[
\boldsymbol{u}_i = \boldsymbol{v}_1^{(i)},\quad 
\boldsymbol{v}_i = \boldsymbol{v}_2^{(i)},\quad 
\boldsymbol{w}_i = \boldsymbol{v}_3^{(i)} \qquad (i=1,\dots,\ell).
\]
In the projective plane $\mathrm{PG}(2,q)$, we identify a non‑zero column vector $\boldsymbol{x}\in\mathbb{F}_q^{3}$ with the point $\langle \boldsymbol{x} \rangle$, and the span of two independent vectors $\langle \boldsymbol{x},\boldsymbol{y} \rangle$ with the line through the corresponding points. Accordingly, for each $i$ we define four lines:
\[
\begin{aligned}
L_{1,i} &= \langle \boldsymbol{u}_i, \boldsymbol{v}_i \rangle, &
L_{2,i} &= \langle \boldsymbol{v}_i, \boldsymbol{w}_i \rangle, \\
L_{3,i} &= \langle \boldsymbol{w}_i, \boldsymbol{u}_i \rangle, &
L_{4,i} &= \langle \boldsymbol{u}_i-\boldsymbol{v}_i,\; \boldsymbol{v}_i-\boldsymbol{w}_i \rangle .
\end{aligned}
\]
Set $\mathcal{B}_i = \{L_{1,i},L_{2,i},L_{3,i},L_{4,i}\}$. By \cite[Proposition 1]{fang2024}, for any $i\neq j$ the sets $\mathcal{B}_i$ and $\mathcal{B}_j$ are disjoint, and no three lines from $\mathcal{B}_i\cup\mathcal{B}_j$ are concurrent. Conversely, given a family $\{\mathcal{B}_i\}_{i=1}^{\ell}$ of quadruples of lines in $\mathrm{PG}(2,q)$ satisfying the same disjointness and non‑concurrency conditions, one can reverse the above correspondence to obtain a $q$-ary Singleton‑optimal $(4\ell,k,d;r)$-LRC with $d=6$, $r=3$ and disjoint repair groups \cite[Theorem 3]{fang2024}. This description is essentially the dual formulation of Lemma \ref{LRCs}.
To apply Lemma \ref{LRCs} directly, we introduce the following geometric object.
\begin{definition}[4-local arc]\label{4-local arc}
    Let $m \geq 2$ and let $\mathcal{S}_1, \mathcal{S}_2,\dots, \mathcal{S}_{m}$ be a collection   of $4$-arcs  in $\mathrm{PG}(2,q)$. If for every $i\neq j$ we have $\mathcal{S}_i\cap \mathcal{S}_j=\emptyset$ and $\mathcal{S}_i\cup \mathcal{S}_j$ is an $8$-arc, then  the union $\mathcal{S}_1\cup\mathcal{S}_2\cup\dots\cup \mathcal{S}_{m}$ is called a \emph{$4$-local arc}.
\end{definition}

\begin{remark}\label{rem:equivalence}
    By Lemma \ref{LRCs}, the existence of a $4$-local arc consisting of $m$ disjoint $4$-arcs is equivalent to the existence of a $q$-ary Singleton‑optimal $(n,k,d=6;r=3)$-LRC with disjoint repair groups, where the code length is $n = 4m$.
\end{remark}





\subsection{Affine planes and projective planes}\label{affine}
Now, we recall the fundamental definitions and constructions of the affine and projective planes over a finite field.
An \emph{affine plane} is an incidence structure of points and lines satisfying the following three axioms:
\begin{enumerate}
    \item   For any two distinct points, there is exactly one line through both;
    \item   Given any line $\ell$ and any point $P$ not on $\ell$, there is exactly one line through $P$ that does not meet $\ell$;
    \item  There exist four points such that no three are collinear.
\end{enumerate}
One constructs a  classical affine plane $\mathrm{AG}(2,q)$ by taking ordered pairs $(x,y)\in \F_q$ as points; and subsets of the form $y=mx+b$ or $x=a$ (for fixed $a,m,b\in \F_q$) as lines; that is, they are point sets of the form
\[ \{(x,mx+b):x\in \mathbb{F}_q\}\; \text{for} \; m,b\in \mathbb{F}_q;\quad \{(a,y):y\in \mathbb{F}_q\}\; \text{for} \; a\in \mathbb{F}_q. \]
The incidence is the natural one: a point is on a given line if and only if it satisfies the required linear equation. Two lines $\ell_1$ and $\ell_2$ in $\mathrm{AG}(2,q)$ are said to be \emph{parallel} if $\ell_1 = \ell_2$ or they have no point in common.  We can check that Parallelism is an equivalence relation on the lines of $\mathrm{AG}(2,q)$. For each $m\in \F_q\cup\{\infty\}$, lines of slope $m$ form a parallel class of lines  (Lines of slope $\infty$ are simply vertical lines of the form $x=$ constant).
The classical projective plane $\mathrm{PG}(2,q)$ can be obtained from $\mathrm{AG}(2,q)$ by the following standard extension:
\begin{enumerate}
    \item  Embed the affine points into the projective plane via the mapping $(x,y) \mapsto \langle (1,x,y) \rangle$;
    \item For each parallel class of lines with slope $m \in \mathbb{F}_q$, add a new \emph{point at infinity}, denoted by $\langle (0,1,m) \rangle$. For the vertical lines ($m=\infty$), add the point at infinity $\langle (0,0,1) \rangle$.
    \item Add a new line, called the \emph{line at infinity}, consisting of all the points at infinity added in step (2).
\end{enumerate}
Thus,  $\mathrm{PG}(2,q)$ can be constructed as follows: Take as points and lines the subspaces of $\mathbb{F}^{3}_q$ of dimension one and two, respectively. A point $P$ lies on a line $\ell$ if and only if $P\subset \ell$ as subspaces of $\mathbb{F}^{3}_q$.

\subsection{Latin square and MOLS}
Let $\mathcal{S}$ be a set of $n$ distinct symbols. A {\em Latin square} of order $n$, denoted by $\LS$ is an $n\times n$ matrix with entries from $\mathcal{S}$ such that for any distinct positions $(i,j)$ and $(i',j')$, it holds $\LS(i, j) \neq \LS(i', j')$ if either $i = i'$ or $j = j'$.

A {\em partial transversal} of $\LS$ is a set $\Gamma$ of  ordered pairs such that for any two distinct $(i,j)$ and $(i',j')$ in $\Gamma$, it holds that $i \neq i'$, $j \neq j'$ and $\LS(i,j) \neq \LS(i', j')$. The number of pairs in $\Gamma$ is called its \emph{length}.  Clearly, the length of a partial transversal cannot exceed $n$. A partial transversal of length $n$ is called a \emph{transversal}.

Let $\LS$ and $\LS'$ be two Latin squares of order $n$ with entries from $\mathcal{S}$. 
We say that $\LS$ and $\LS'$ are {\em orthogonal} if for every ordered pair $(s,s') \in \mathcal{S} \times \mathcal{S}$, there exists a unique position $(i,j)$ such that $\left(\LS(i, j), \LS'(i,j)\right) = (s,s')$.  
If they are orthogonal, we say that $\LS$ is an {\em orthogonal mate} of $\LS'$ or vice versa. A set of Latin squares of order $n$ is called a set of \emph{mutually orthogonal Latin squares} (MOLS) if every two distinct Latin squares in the set are orthogonal. When $n$ is a prime power, a set of MOLS is said to be \emph{complete} if it consists of exactly $n-1$ Latin squares. For basic properties and further background on Latin squares and MOLS, we refer to \cite{DK15}.

\section{From MOLS to affine plane}\label{s3}


Let $\omega$ be a primitive element of $\F_{q}$. For $k=1,2,\dots,q-1$, define a Latin square $\MOLS_k$  of order $q$ with rows and columns indexed by $\mathbb{F}_q$, whose entry in position $(i,j)$ is \[
\MOLS_k (i,j) = \omega^{\,k-1}\, i + j, \qquad i,j \in \mathbb{F}_q .
\]

\begin{lemma}\cite[Theorem~5.2.4]{DK15} \label{lem:falcon}
    The Latin squares $\MOLS_{k}$, $k=1,2,\dots,q-1$ form a complete set of MOLS of order $q$. 
\end{lemma}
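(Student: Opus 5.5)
This is the standard proof that the squares $\MOLS_k(i,j)=\omega^{k-1}i+j$ form a complete set of MOLS. The argument uses only field arithmetic in $\F_q$, and it has three steps: count the squares, check the Latin property, and check orthogonality.

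\emph{Counting.} The multipliers $\omega^{k-1}$ for $k=1,\dots,q-1$ run over all of $\F_q^{*}$ exactly once, since $\omega$ is primitive. So the family consists of $q-1$ distinct squares, indexed by the distinct nonzero scalars $a_k=\omega^{k-1}$. This is exactly the number required for completeness.

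\emph{Latin property.} Fix $k$ and write $a=a_k\neq 0$.
\begin{itemize}
\item In a fixed row $i$, the map $j\mapsto ai+j$ is a translation of $\F_q$. It is therefore a bijection, so each symbol appears exactly once in the row.
\item In a fixed column $j$, the map $i\mapsto ai+j$ is a bijection because $a$ is invertible. So each symbol appears exactly once in the column.
\end{itemize}
Hence each $\MOLS_k$ is a Latin square of order $q$ on the symbol set $\F_q$.

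\emph{Orthogonality.} This is the main step, though it is short. Take $k\neq k'$ with $a=a_k$ and $b=a_{k'}$, so $a\neq b$. Given any ordered pair $(s,s')\in\F_q^2$, we must show there is exactly one position $(i,j)$ with
\[
ai+j=s,\qquad bi+j=s'.
\]
Subtracting the two equations gives $(a-b)i=s-s'$. Since $a-b\neq0$, this forces $i=(s-s')/(a-b)$. Then $j=s-ai$ is also uniquely determined, and this pair $(i,j)$ does satisfy both equations. Equivalently, the coefficient matrix $\begin{pmatrix} a & 1\\ b & 1\end{pmatrix}$ has determinant $a-b\neq 0$. So every ordered pair of symbols occurs in exactly one cell, and $\MOLS_k$ and $\MOLS_{k'}$ are orthogonal.

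Combining the three steps, the $q-1$ squares are pairwise orthogonal Latin squares of order $q$, and they form a complete set of MOLS. No step presents a real obstacle. The only point needing care is that distinct $k$ give distinct nonzero multipliers, and this is exactly what the primitivity of $\omega$ provides.
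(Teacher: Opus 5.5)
Your proof is correct and complete. The paper gives no argument of its own here, only the citation \cite[Theorem~5.2.4]{DK15}, and your direct verification is the standard field-arithmetic argument behind that cited result: the multipliers $\omega^{k-1}$ are distinct and nonzero, rows and columns are bijections of $\mathbb{F}_q$, and $ai+j=s$, $bi+j=s'$ has a unique solution since $a-b\neq 0$.
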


\begin{remark}\label{rem:Cayley}
    Note that $\MOLS_1$ is  precisely the  Cayley table of the additive group  $(\mathbb{F}_q, +)$. Thanks to \cref{lem:falcon}, we know that the Cayley table can be extended to a complete set of MOLS.
\end{remark}

Let $\mathcal{L} = \{ \MOLS_1, \MOLS_2, \ldots, \MOLS_{q-1} \}$ be a complete set of MOLS of order $q$ defined in \cref{lem:falcon}. We extend this family by defining two additional $q \times q$ arrays:
\[
 \MOLS_\infty = 
\begin{bmatrix}
    0      & 0      & \cdots & 0      \\
    1      & 1      & \cdots & 1      \\
    \vdots & \vdots & \ddots & \vdots \\
    \omega^{q-2} & \omega^{q-2} & \cdots & \omega^{q-2}
\end{bmatrix},
\qquad
 \MOLS_0= 
\begin{bmatrix}
    0      & 1      & \cdots & \omega^{q-2} \\
    0      & 1      & \cdots & \omega^{q-2} \\
    \vdots & \vdots & \ddots & \vdots \\
    0      & 1      & \cdots & \omega^{q-2}
\end{bmatrix},
\]
where   $\omega$ is a primitive element of $\mathbb{F}_q$ and the rows and columns are indexed by $\mathbb{F}_q$. With these arrays we can describe the classical affine plane $\mathrm{AG}(2,q)$ in the following way (see \cite[Theorem 5.2.2]{DK15}):
\begin{enumerate}
    \item The point set is $\mathbb{F}_q \times \mathbb{F}_q$.
    \item For each $k \in \{0,1,\dots,q-1,\infty\}$ and each $\ell \in \mathbb{F}_q$, define a line
    \[
    \mathscr{L}(k,\ell) \;:=\; \bigl\{ (i,j) \in \mathbb{F}_q \times \mathbb{F}_q \mid \MOLS_k(i,j) = \ell \bigr\}.
    \]
\end{enumerate}
The collection of all such lines, together with the point set, forms the affine plane $\mathrm{AG}(2,q)$.

The following simple but useful proposition describes explicitly the unique line through two given distinct points in $\mathrm{AG}(2,q)$.

\begin{proposition}\label{prop:point-line}
    Let $(i, j)$ and $(i', j')$ be two distinct points of $\mathrm{AG}(2,q)$, and let $\ell$ be the unique line passing through both. Then:
    \begin{enumerate}
        \item If $i = i'$, then $\ell = \mathscr{L}(\infty, i)$, and the points on $\ell$ are $\{(i, k) \mid k \in \mathbb{F}_q\}$. 
        \item If $j = j'$, then $\ell = \mathscr{L}(0, j)$, and the points on $\ell$ are $\{(k, j) \mid k \in \mathbb{F}_q\}$.  
        \item If $i \neq i'$ and $j \neq j'$,  there exists a unique pair $(m, k) \in \{1,\dots,q-1\} \times \mathbb{F}_q$ such that 
              $\omega^{\,m-1} i + j = \omega^{\,m-1} i' + j' = k$, and $\ell = \mathscr{L}(m, k)$.
    \end{enumerate}
\end{proposition}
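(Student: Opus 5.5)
The plan is to use the explicit description of the lines $\mathscr{L}(k,\ell)$ as level sets of the arrays $\MOLS_k$, and to reduce the statement to solving linear equations over $\mathbb{F}_q$. Each of the three cases is handled by checking which array takes the same value at $(i,j)$ and at $(i',j')$. Since every line of $\mathrm{AG}(2,q)$ is some $\mathscr{L}(k,\ell)$, and two distinct points lie on exactly one line, it suffices to find \emph{one} pair $(k,\ell)$ with $\MOLS_k(i,j)=\MOLS_k(i',j')=\ell$. Uniqueness of the line then follows from the first axiom of the affine plane, which holds for this model by \cite[Theorem 5.2.2]{DK15}.

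For case (1), assume $i=i'$. The array $\MOLS_\infty$ has constant rows, with the entry of row $i$ equal to $i$. So $\MOLS_\infty(i,j)=\MOLS_\infty(i,j')=i$, and both points lie on $\mathscr{L}(\infty,i)=\{(i,k): k\in\mathbb{F}_q\}$. Case (2) is symmetric: $\MOLS_0$ has constant columns with value $j$, so both points lie on $\mathscr{L}(0,j)=\{(k,j):k\in\mathbb{F}_q\}$. Here I am reading the index labels $0,1,\dots,\omega^{q-2}$ as the field elements themselves, so that the entry in row $i$ of $\MOLS_\infty$ is just $i$.

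For case (3), assume $i\neq i'$ and $j\neq j'$.
\begin{itemize}
\item Neither $\MOLS_\infty$ nor $\MOLS_0$ can contain both points, since these arrays separate distinct rows and distinct columns respectively.
\item So the line must be some $\mathscr{L}(m,k)$ with $1\le m\le q-1$. We need $\omega^{m-1}i+j=\omega^{m-1}i'+j'$, which is equivalent to $\omega^{m-1}=(j'-j)/(i-i')$.
\item The right-hand side is a nonzero element of $\mathbb{F}_q$, because $j\ne j'$ and $i\ne i'$.
\item Since $\omega$ is primitive, $m\mapsto\omega^{m-1}$ is a bijection from $\{1,\dots,q-1\}$ onto $\mathbb{F}_q^{*}$. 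Hence there is exactly one such $m$.
\item Setting $k=\omega^{m-1}i+j$ then determines $k$ uniquely.
\end{itemize}
This gives both existence and uniqueness of the pair $(m,k)$.

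There is no real obstacle. The only point needing care is case (3), where we must note that the ratio $(j'-j)/(i-i')$ is nonzero. This is what rules out the degenerate slopes $0$ and $\infty$ and makes the primitive-element bijection applicable.
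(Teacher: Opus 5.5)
Your proof is correct and complete. The paper states this proposition without proof, treating it as immediate from the definitions $\MOLS_\infty(i,j)=i$, $\MOLS_0(i,j)=j$ and $\MOLS_m(i,j)=\omega^{m-1}i+j$; your case analysis, in particular solving $\omega^{m-1}=(j'-j)/(i-i')$ and using that $m\mapsto\omega^{m-1}$ is a bijection $\{1,\dots,q-1\}\to\mathbb{F}_q^{*}$, is exactly the routine verification the paper leaves implicit, with uniqueness of $\ell$ supplied by the affine-plane structure you cite.
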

\section{Constructions of \MA s}\label{s4}

\subsection{When \texorpdfstring{$q$}{q} is a power of two}

Assume that $q = 2^r$ for some integer $r \geq 1$. The field $\mathbb{F}_{2^r}$ can be viewed as an $r$-dimensional vector space over $\mathbb{F}_2$. Fix a primitive element $\omega \in \mathbb{F}_{2^r}$; then $\{1, \omega, \omega^2, \dots, \omega^{r-1}\}$ is a basis of this vector space over $\mathbb{F}_2$.
Now arrange the elements of $\F_{2^r}$ as
\begin{align}\label{eq:order}
    0, 1, \omega, \omega + 1, \omega^2, \omega^2 + 1, \omega^2 + \omega, \omega^2+\omega+1, \ldots, \omega^{r-1} + \omega^{r-2} + \cdots + 1.
\end{align}
For every $r \geq 1$, let $\LS^{(r)}$ denote the Latin square of order $2^r$ induced by the Cayley table  of the additive group $(\mathbb{F}_{2^r}, +)$. We illustrate the recursive structure of these squares with the smallest cases:

\noindent
\textbf{Case $r=1$.} The Cayley table of $(\mathbb{F}_{2^1}, +)$ is
\[
\begin{array}{c|cc}
    + & 0 & 1 \\ \hline
    0 & 0 & 1 \\
    1 & 1 & 0
\end{array}
\qquad\text{giving}\qquad 
\LS^{(1)} = \begin{bmatrix}
    0 & 1 \\
    1 & 0
\end{bmatrix}.
\]

\noindent
\textbf{Case $r=2$.} Let $\omega$ be a primitive element of $\mathbb{F}_{2^2}$ satisfying $\omega^2 = \omega + 1$. 
The Cayley table of $(\mathbb{F}_{2^2}, +)$ is
\[
\begin{array}{c|cccc}
    + & 0 & 1 & \omega & \omega+1 \\ \hline
    0 & 0 & 1 & \omega & \omega+1 \\
    1 & 1 & 0 & \omega+1 & \omega \\
    \omega & \omega & \omega+1 & 0 & 1 \\
    \omega+1 & \omega+1 & \omega & 1 & 0
\end{array}
\qquad\text{giving}\qquad 
\LS^{(2)} = \begin{bmatrix}
    0 & 1 & \omega & \omega+1 \\
    1 & 0 & \omega+1 & \omega \\
    \omega & \omega+1 & 0 & 1 \\
    \omega+1 & \omega & 1 & 0
\end{bmatrix}.
\]
Inductively, ordering the elements of $\mathbb{F}_{2^r}$ as in \eqref{eq:order} yields a block decomposition of its Cayley table. 
Writing the elements in the order 
\[
0,\;1,\;\dots,\;\omega^{r-2}+\cdots+1 \quad\text{(first half)},\qquad 
\omega^{r-1},\;\dots,\;\omega^{r-1}+\cdots+1 \quad\text{(second half)},
\]
the Cayley table of $(\mathbb{F}_{2^r},+)$ takes the form
\[
\begin{array}{c|cc}
+ & \text{first half} & \text{second half} \\ \hline
\text{first half}  & \LS^{(r-1)} & \LS^{(r-1)} + \omega^{r-1} \\
\text{second half} & \LS^{(r-1)} + \omega^{r-1} &  \LS^{(r-1)}
\end{array}
\]
where $\LS^{(r-1)}$ is the Cayley table of $\mathbb{F}_{2^{r-1}}$ (with respect to the induced ordering) and the notation $+\omega^{r-1}$ means adding $\omega^{r-1}$ to every entry of the block. 
Consequently, the corresponding Latin square $\LS^{(r)}$ satisfies the recursive block construction

\begin{equation}\label{eq:recursive-L}
\LS^{(r)}= \begin{bmatrix}
 \LS^{(r-1)} & \LS^{(r-1)} + \omega^{r-1} \\[4pt]
    \LS^{(r-1)} + \omega^{r-1} & \LS^{(r-1)}
\end{bmatrix},
\qquad r\ge 2 .
\end{equation}

It follows from \cref{rem:Cayley} that $\LS^{(r)}$ can be extended to a complete MOLS, hence, it has an orthogonal mate when $r\geq 2$. By the following lemma, the existence of such a mate implies that $\LS^{(r)}$ has a transversal.
\begin{lemma}
\cite[Theorem 5.1.1]{DK15}  A Latin square of order $n$ has an orthogonal mate if and only if it has $n$ disjoint transversals. 
\end{lemma}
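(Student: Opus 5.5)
The plan is to prove both directions directly from the definitions, using one correspondence: each symbol class of a mate is a transversal of $\LS$, and conversely. Let $\LS$ be a Latin square of order $n$ on a symbol set $\mathcal{S}$ with $\card{\mathcal{S}}=n$. Rows and columns are indexed by an $n$-set.

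For the forward direction, suppose $\LS'$ is an orthogonal mate of $\LS$. For each symbol $s\in\mathcal{S}$ put
\[
T_s=\{(i,j)\;:\;\LS'(i,j)=s\}.
\]
\begin{enumerate}
\item Since $\LS'$ is Latin, $s$ occurs exactly once in each row and each column of $\LS'$. So $T_s$ has exactly $n$ cells, with distinct rows and distinct columns.
\item By orthogonality, for each $t\in\mathcal{S}$ the pair $(t,s)$ occurs at exactly one cell. Hence the entries $\LS(i,j)$ for $(i,j)\in T_s$ are pairwise distinct, so $T_s$ is a transversal of $\LS$.
\item The sets $T_s$ for distinct $s$ are disjoint, because each cell carries a single symbol of $\LS'$. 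This gives $n$ disjoint transversals.
\end{enumerate}

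For the converse, let $T_1,\dots,T_n$ be pairwise disjoint transversals of $\LS$. Each has exactly $n$ cells, so together they contain $n\cdot n=n^2$ cells and therefore partition the whole $n\times n$ array. Index them by the symbols, writing $T_s$ for $s\in\mathcal{S}$, and define $\LS'(i,j)=s$ whenever $(i,j)\in T_s$; this is well defined by the partition property.
\begin{enumerate}
\item $\LS'$ is Latin: each $T_s$ meets every row and every column exactly once, so $s$ appears once per row and once per column.
\item $\LS'$ is orthogonal to $\LS$: given $(t,s)\in\mathcal{S}\times\mathcal{S}$, the transversal $T_s$ contains each symbol of $\LS$ exactly once. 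So there is exactly one cell $(i,j)\in T_s$ with $\LS(i,j)=t$, and every cell with $\LS'(i,j)=s$ lies in $T_s$. Hence $(\LS(i,j),\LS'(i,j))=(t,s)$ holds at a unique position.
\end{enumerate}

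There is no genuine obstacle here. The one step needing care is the counting argument in the converse: $n$ disjoint transversals of size $n$ must cover all $n^2$ cells, so that $\LS'$ is defined everywhere.
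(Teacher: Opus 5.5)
Your proof is correct in both directions, including the counting step showing that $n$ cell-disjoint transversals cover all $n^2$ cells, so the mate is defined everywhere. The paper gives no proof of its own here and simply cites \cite[Theorem 5.1.1]{DK15}; the standard argument for that result is the one you give, where the cells carrying a fixed symbol of the mate form a transversal, and conversely labelling a partition into transversals by the symbols produces an orthogonal mate.
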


Now we are ready to construct a \MA \ in $\mathrm{PG}(2,2^{r})$ for every $r \geq 2$. Recall from Section \ref{affine} that points of the affine plane $\mathrm{AG}(2,2^{r})$ are embedded into the projective plane $\mathrm{PG}(2,2^{r})$ via $(x,y) \mapsto \langle (1,x,y) \rangle$.
In what follows we shall identify each affine point with its projective image and describe our configuration directly in $\mathrm{AG}(2,2^{r})$. Let $\Gamma$ be a transversal of the Latin square  $\LS^{(r-1)}$ defined in \eqref{eq:recursive-L}. 
For each $u = (i,j) \in \Gamma$, define a set of four points in $\mathrm{AG}(2,2^{r})$:
\begin{equation}\label{eq:2q}
    \Sigma_u \;:=\; 
    \bigl\{\, (i,j),\;(i+\omega^{r-1},\; j),\;(i,\; j+\omega^{r-1}),\;(i+\omega^{r-1},\; j+\omega^{r-1}) \,\bigr\}.
\end{equation}

\begin{theorem}\label{thm:evennumber}
    Let $q = 2^r$ with $r\geq 2$. Then, after embedding into $\mathrm{PG}(2,q)$,   $\bigcup_{u\in \Gamma} \Sigma_{u}$ is a \MA \ in \PG$(2,q)$ of size $2q$. 
\end{theorem}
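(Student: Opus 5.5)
The plan is to verify directly in $\mathrm{AG}(2,q)$ that each $\Sigma_u$ is a $4$-arc, that the $\Sigma_u$ are pairwise disjoint, and that no three points of $\Sigma_u\cup\Sigma_v$ ($u\neq v$) are collinear; \cref{LRCs} via \cref{rem:equivalence} then gives the code. All points are affine, so collinearity in $\mathrm{PG}(2,q)$ coincides with collinearity in $\mathrm{AG}(2,q)$, and we can argue with difference vectors. Write $c=\omega^{r-1}$ and let $V=\langle 1,\omega,\dots,\omega^{r-2}\rangle_{\mathbb{F}_2}$, the first half in \eqref{eq:order}. Then $\mathbb{F}_q=V\oplus\{0,c\}$ as $\mathbb{F}_2$-spaces, and the rows, columns and symbols of $\LS^{(r-1)}$ are indexed by $V$ with entry $i+j$. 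So $\Gamma$ is a set of $q/2$ pairs $(i,j)\in V\times V$ with pairwise distinct first coordinates, pairwise distinct second coordinates, and pairwise distinct sums $i+j$. Every point of $\bigcup_u\Sigma_u$ has the unique form $(i+\varepsilon c,\,j+\delta c)$ with $(i,j)\in V^2$ and $\varepsilon,\delta\in\{0,1\}$. Hence projecting both coordinates onto $V$ recovers $u$, which gives disjointness and the total size $4\cdot q/2=2q$.

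\textbf{Each $\Sigma_u$ is a $4$-arc.} The differences between points of $\Sigma_u$ are $(c,0)$, $(0,c)$ and $(c,c)$ (in characteristic $2$, signs don't matter). These have slopes $0,\infty,1$, so they are pairwise non-parallel. For any three points $P,Q,R\in\Sigma_u$, the vectors $Q-P$ and $R-P$ are two distinct ones among these, so they are independent and $P,Q,R$ are not collinear.

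\textbf{The union of two blocks is an $8$-arc.} Take $u=(i,j)\neq v=(i',j')$ in $\Gamma$. Among any three points of $\Sigma_u\cup\Sigma_v$, two lie in the same block, say $P,Q\in\Sigma_u$ and $R\in\Sigma_v$ (the case with all three in one block is already done). The line $PQ$ has direction $(c,0)$, $(0,c)$ or $(c,c)$. Write $R-P=(i'+i+\varepsilon c,\; j'+j+\delta c)$ with $i'+i,\ j'+j\in V$.
\begin{itemize}
\item Direction $(c,0)$: $R$ on the line forces $j'+j+\delta c=0$. By directness of $V\oplus\{0,c\}$ this gives $j=j'$, a contradiction.
\item Direction $(0,c)$: symmetrically this forces $i=i'$, a contradiction.
\item Direction $(c,c)$: the two coordinates of $R-P$ must be equal. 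Comparing $V$-components gives $i'+i=j'+j$, i.e. $i+j=i'+j'$ in characteristic $2$. This contradicts the distinctness of symbols along the transversal.
\end{itemize}
Thus every pair of blocks forms an $8$-arc, and $\bigcup_{u\in\Gamma}\Sigma_u$ is a $4$-local arc of size $2q$.

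\textbf{Main obstacle: existence of $\Gamma$.} The configuration argument above is routine once the splitting $\mathbb{F}_q=V\oplus\{0,c\}$ is in place; the delicate step is having a transversal of $\LS^{(r-1)}$ at all. By \cref{rem:Cayley} and the cited theorem, $\LS^{(r-1)}$ has an orthogonal mate, hence a transversal, when $r-1\ge2$. For $r=2$, however, $\LS^{(1)}=\begin{bmatrix}0&1\\1&0\end{bmatrix}$ has no transversal, so the argument as written requires $r\ge3$. The case $q=4$ would need a separate ad hoc verification, e.g. an explicit $8$-arc split into two $4$-arcs in $\mathrm{PG}(2,4)$, whose hyperoval has $6$ points. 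This suggests that case must be handled or excluded separately, and I would flag it as such.
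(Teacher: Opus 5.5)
Your proof is correct for $r\geq 3$ and follows essentially the paper's argument: the paper also lists the six lines of $\Sigma_u$ (two each of slopes $\infty$, $0$, $1$, each meeting $\Sigma_u$ in exactly two points) and reduces the cross-block condition to $i\neq i'$, $j\neq j'$, $i+j\neq i'+j'$. The paper, however, makes that reduction only ``by symmetry''; your splitting $\mathbb{F}_q=V\oplus\{0,c\}$ is what actually justifies it, and it also justifies the disjointness the paper uses tacitly.

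Your caveat about $r=2$ is also correct, and it exposes a genuine flaw in the statement rather than in your argument. The paper's proof needs a transversal of $\LS^{(1)}$, which does not exist. Nor can the case be repaired: arcs in $\mathrm{PG}(2,4)$ have at most $q+2=6$ points, so no $8$-arc, and hence no $4$-local arc at all, exists there. The hypothesis should therefore be $r\geq 3$, which matches the abstract's $q\geq 7$. So you should say the case $q=4$ must be \emph{excluded}, rather than suggesting it could be settled by an ad hoc verification.
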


\begin{proof}
    Since $\Gamma$ is a transversal of $\LS^{(r-1)}$, we have $\left| \Gamma \right| = 2^{r-1} = q/2$. 
    Then the size of $\bigcup_{u\in \Gamma} \Sigma_{u}$ is $4\left| \Gamma \right| = 2q$.  Recall that we identify affine points with their projective images. Hence, to prove that $\bigcup_{u\in \Gamma} \Sigma_{u}$ is a $4$-local arc in $\mathrm{PG}(2,q)$, it suffices to verify the defining properties of a $4$-local arc for the corresponding point sets in $\mathrm{AG}(2,q)$.
   Fix $u = (i,j)\in \Gamma$. The four points of $\Sigma_u$ are
    \[
    (i, j),\; (i+\omega^{r-1}, j),\; (i, j+\omega^{r-1}),\; (i+\omega^{r-1}, j+\omega^{r-1}).
    \]
 By \cref{prop:point-line}, the four points in $\Sigma_u$ determine the following six lines 
    \begin{align*}
        \li(\infty, i), \li(\infty, i+\omega^{r-1}), \li(0, j), \li(0, j+\omega^{r-1}), \li(1, i+j), \li(1, i+j+\omega^{r-1}). 
    \end{align*}
    Each of the six lines contains precisely two points of $\Sigma_u$. Hence, no three points of $\Sigma_u$ lie on a common line.

    Take another $v = (i', j') \in \Gamma$ with $v \neq u$.  We must show that no point of $\Sigma_v$ lies on any of the six lines  determined by  $\Sigma_u$.
    By the symmetry of the construction, it is enough to prove that the point  $v$ is not on  the three lines $\li(\infty, i), \li(0, j), \li(1, i+j)$.  Recall that $\Gamma$ is a transversal of $\LS^{(r-1)}$. Therefore, $i\neq i'$, $j\neq j'$ and $\LS^{(r)}(i,j) \neq \LS^{(r)}(i',j')$.  The first two inequalities immediately imply $(i', j') \notin  \li(\infty, i)$ and $(i', j') \notin\li(0, j)$. The third condition is equivalent to $i + j \neq i' + j'$, which proves $(i', j') \notin\li(1, i+j)$.  This completes the proof.
\end{proof}

We illustrate the general construction with the concrete case  $q=8$.

\begin{figure}[H]
    \centering
    \includegraphics[width=0.4\linewidth]{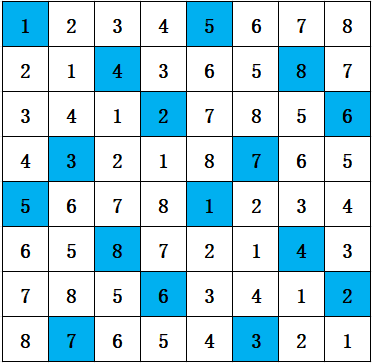}
    \caption{A \MA \ constructed from a Latin square of order $8$.}
    \label{fig:LS8}
\end{figure}

\begin{example}\label{ex:Zeitgeist}
        Consider the Latin square of order 8 shown in Figure~\ref{fig:LS8}. 
    Following the construction in Theorem \ref{thm:evennumber}, we obtain a 4-local arc consisting of four blocks
    \[
    \mathcal{S}_i = \{\,\text{blue cells containing symbols } i \text{ and } i+4\,\}, \qquad i=1,2,3,4.
    \]
     For instance, the four points of $\mathcal{S}_1$ determine six lines. In the Latin square, these correspond to the first and fifth rows, the first and fifth columns, and the cells marked with symbols 1 and 5.  One can visually verify that no other blue cell lies on any of these six lines, confirming the ``no‑three‑collinear'' condition across different blocks.
     
     Let $\omega$  be a primitive element of $\F_8$ satisfying $\omega^3+\omega+1=0$. 
      Identifying the symbols $(1,2,\dots ,8)$ with the field elements
    \[
    (0,\;1,\;\omega,\;\omega+1,\;\omega^{2},\;\omega^{2}+1,\;\omega^{2}+\omega,\;\omega^{2}+\omega+1),
    \]
    the square in Figure \ref{fig:LS8} coincides with $\LS^{(3)}$ – the Cayley table of $(\mathbb{F}_{2^{3}},+)$.  Under this identification, the four blocks become the following sets of projective points (written in homogeneous coordinates):
\[\mathcal{S}_1=\{\langle(1,0,0)\rangle,\langle(1,0,\omega^2)\rangle,\langle(1,\omega^2,0)\rangle,\langle(1,\omega^2,\omega^2)\rangle\},\]
\[\mathcal{S}_2=\{\langle(1,1,\omega)\rangle,\langle(1,1,\omega^4)\rangle,\langle(1,\omega^6,\omega)\rangle,\langle(1,\omega^6,\omega^4)\rangle\},\]
\[\mathcal{S}_3=\{\langle(1,\omega,\omega^3)\rangle,\langle(1,\omega,\omega^5)\rangle,\langle(1,\omega^4,\omega^3)\rangle,\langle(1,\omega^4,\omega^5)\rangle\},\]
\[\mathcal{S}_4=\{\langle(1,\omega^3,1)\rangle,\langle(1,\omega^3,\omega^6)\rangle,\langle(1,\omega^5,1)\rangle,\langle(1,\omega^5,\omega^6)\rangle\}.\]
Applying the correspondence described in Section \ref{disjoint} to this  4-local arc (also see \cite[Algorithm 2]{fang2024}) yields the following parity‑check matrix of a Singleton‑optimal $(16,9,6;3)$-LRC over $\mathbb{F}_8$:
\begin{equation*}
\setlength{\arraycolsep}{1.5pt}
\renewcommand{\arraystretch}{0.9}
H = \left(
\begin{array}{cccc|cccc|cccc|cccc}  
1&~1~&~1~&~1~&~0~&~0~&~0~&~0~&~0~&~0~&~0~&~0~&~0~&~0~&~0~&0\\0&0&0&0&1&1&1&1&0&0&0&0&0&0&0&0\\0&0&0&0&0&0&0&0&1&1&1&1&0&0&0&0\\0&0&0&0&0&0&0&0&0&0&0&0&1&1&1&1\\
\hline 0&0&0&\omega^2&0&\omega&1&\omega^5&0&\omega^3&\omega&\omega^6&0&1&\omega^3&\omega^4\\0&0&1&1&0&0&1&1&0&0&1&1&0&0&1&1\\0&1&0&1&0&1&0&1&0&1&0&1&0&1&0&1
\end{array}
\right).
\end{equation*}
\end{example}

\subsection{When  \texorpdfstring{$q$}{q} is an odd prime power}\label{even}
We now turn to the case where $q$ is an odd prime power. We first recall a basic fact about cyclic Latin squares. 

A Latin square of order $n$ is called {\em cyclic} if each row is a cyclic shift of its previous row. Up to relabelling of symbols, a cyclic Latin square $\LS$ can be written as
\[
\begin{bmatrix}
1 & 2 & 3 & \cdots & n\\
n & 1 & 2 & \cdots & n-1\\
n-1 & n & 1 & \cdots & n-2\\
\vdots & \vdots & \vdots & \ddots & \vdots\\
2 & 3 & 4 & \cdots & 1
\end{bmatrix}.
\]

\begin{lemma}\label{lem:partialtransversal}
    Let   $\LS$ be a cyclic Latin square of order $n$, defined by $\LS(i,j) = j-i+1 \pmod{n}$ for $i,j \in \mathbb{Z}_n$.
    \begin{enumerate}
        \item If  $n$ is odd, then $\LS$  has a transversal.
        \item If $n$ is even, then $\LS$  has a partial transversal of length $n-1$.
    \end{enumerate}
\end{lemma}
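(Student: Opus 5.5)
The plan is to write down explicit (partial) transversals in both cases and check the three transversal conditions directly. For a cell $(i,j)$ the symbol is $j-i+1$, so a set $\Gamma$ of cells is a partial transversal exactly when its rows $i$ are pairwise distinct, its columns $j$ are pairwise distinct, and its differences $j-i \pmod n$ are pairwise distinct. The additive constant $+1$ plays no role in any of these conditions.

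For (1), with $n$ odd, take the diagonal-type set
\[
\Gamma=\{(i,2i): i\in\mathbb{Z}_n\}.
\]
The rows are clearly distinct. The columns $2i$ are distinct because $2$ is invertible modulo $n$. The differences are $j-i=i$, which are also distinct. Hence $\Gamma$ has length $n$ and is a transversal. Equivalently, one could take $(i,-i)$ with differences $-2i$; either choice works for the same reason, namely that $2$ is a unit in $\mathbb{Z}_n$.

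For (2), with $n$ even, the map $i\mapsto 2i$ is no longer injective, so we need a genuinely different construction. Write $n=2m$. The aim is $n-1$ cells with distinct rows, distinct columns and distinct differences. First try a two-slope construction:
\[
\Gamma=\{(i,2i): 0\le i\le m-1\}\cup\{(i,2i+1): m\le i\le 2m-2\},
\]
with all entries read modulo $n$. The columns in the first part are the even residues $0,2,\dots,2m-2$. In the second part they are $2i+1$ for $i=m,\dots,2m-2$, which reduce to the odd residues $1,3,\dots,2m-3$. So the columns are distinct, and they miss only $2m-1$. The differences are $i$ in the first part, giving $0,\dots,m-1$, and $i+1$ in the second part, giving $m+1,\dots,2m-1$. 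These are distinct and miss only $m$. The rows $0,\dots,2m-2$ are distinct and miss only $2m-1$. This gives $n-1$ cells, as required.

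The only real obstacle is the even case: a full transversal cannot exist there, since the sum of the differences would be forced to equal both $0$ and $n/2$ modulo $n$. This parity obstruction is what motivates splitting the index set into two halves with slopes $2i$ and $2i+1$. The verification itself is routine bookkeeping of residues modulo $2m$.
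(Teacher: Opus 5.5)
Your proposal is correct and follows essentially the same route as the paper: the same transversal $\{(i,2i): i\in\mathbb{Z}_n\}$ for odd $n$. For even $n=2m$ it uses the same two-part construction, namely the cells $(i,2i)$ on the first half together with a shifted slope-$2$ family covering the odd columns $1,3,\dots,2m-3$. The only difference is the second family: the paper takes $(r,2r-1)$ for $r=m+1,\dots,2m-1$, omitting row $m$, whereas you take $(i,2i+1)$ for $i=m,\dots,2m-2$, omitting row $2m-1$. Your explicit bookkeeping of rows, columns and differences replaces the paper's ``we can check'' step, and your parity remark about the nonexistence of a full transversal is correct but not needed.
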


\begin{proof}
    We explicitly construct the required set of cells.
    
    \textbf{Case 1: $n = 2k+1$ odd.}
    Define
    \[
    \Gamma = \{\, (i, 2i) \mid i \in \mathbb{Z}_n \,\}.
    \]
     Since $n$ is odd, the map $i \mapsto 2i \pmod{n}$ is a permutation of $\mathbb{Z}_n$. 
    Hence the $n$ pairs in $\Gamma$ have distinct first coordinates and distinct second coordinates.
    Moreover, for each $(i,2i)\in \Gamma$, it holds that $\LS(i,2i) = i+1$. Therefore, $\Gamma$ is a transversal of $\LS$. 
   
    \textbf{Case 2: $n = 2k$ even.}
    Define
    \[  \Gamma = \{ (i, 2i) : \ i = 0,1,\ldots, k-1 \} \cup \{ (i+1, 2i-2k+1) : \ i = k, \ldots, 2k-2 \} \]
    as a set of size $n-1$. We can check that for any two distinct $(i,j)$ and $(i',j')$ in $\Gamma$,  $i \neq i'$ and $j \neq j'$ hold. Moreover,  $\LS(i,j) = i+1$ holds for any $(i,j) \in \Gamma$. 
   Thus,  $\Gamma$ is a partial transversal of $\LS$ of length $n-1$. This completes the proof. 
\end{proof}

Now we are ready to describe our construction of $4$-local arcs for odd $q$. 

 Write $\F_q = \{ 0, 1, \omega, \omega^2, \ldots, \omega^{q-2} \}$
and $\F_q^\ast = \F_q\setminus\{0\}$, where $\omega$ is a primitive element of $\F_q$. 
Let $\{ \MOLS_m : \ m = 0,1,\ldots, q-1, \infty \}$, $\li(k,\ell)$ and \AG$(2,q)$ be as defined in \cref{s3}. 
We define a $(q-1)\times (q-1)$ matrix $M$ whose rows and columns are indexed by the elements of $\F_q^\ast = \{ 1, \omega, \omega^2, \ldots, \omega^{q-2} \}$. 
For any $i,j \in \F_q^\ast$, we set $M(i,j)=m \in \{ 1,\ldots, \frac{q-1}{2} \}$ if the point $(i,j)$ lies on either $\li(m,0)$ or $\li\left(m+\frac{q-1}{2}, 0\right)$. 
Therefore, 
\begin{align}\label{eq:pinnacle}
   M = \begin{pmatrix}
L & L \\
L & L
\end{pmatrix},
\end{align}
where  $L$ is precisely a cyclic Latin square of order $\frac{q-1}{2}$ given by
\[
L = 
\begin{bmatrix}
1 & 2 & 3 & \cdots & \frac{q-1}{2} \\
\frac{q-1}{2} & 1 & 2 & \cdots & \frac{q-1}{2}-1 \\
\frac{q-1}{2}-1 & \frac{q-1}{2} & 1 & \cdots & \frac{q-1}{2}-2 \\
\vdots & \vdots & \vdots & \ddots & \vdots \\
2 & 3 & 4 & \cdots & 1
\end{bmatrix}.
\]
For convenience we refer to the first $\frac{q-1}{2}$ rows and columns of $M$ as the ``first block'', corresponding to the elements $1,\omega,\dots,\omega^{(q-3)/2}$.
Thus $L$ itself is the submatrix of $M$ restricted to the first block.

Let $\Gamma$ be a partial transversal of the cyclic Latin square $L$. 
For each $u = (i,j) \in \Gamma$, define a set of four points in $\mathrm{AG}(2,q)$: 
    \begin{align}\label{eq:4pts}
        \Sigma_u \;:=\; \bigl\{\, (i, j),\; (-i, j),\; (i, -j),\; (-i, -j) \,\bigr\}.
    \end{align}
The union  $\bigcup_{u\in \Gamma} \Sigma_{u}$ is our  proposed $4$-local arc after embedding into $\mathrm{PG}(2,q)$.

\begin{theorem}\label{thm:thimble}
   Let $q$ be an odd prime power.  Then, after embedding into $\mathrm{PG}(2,q)$,    $\bigcup_{u\in \Gamma} \Sigma_{u}$ is a \MA \ in \PG$(2,q)$ of size 
        \begin{align}\label{eq:Aplomb}
        n=\left\{
        \begin{array}{ll}
            2q-2, & \text{if } q \equiv 3 \pmod{4}; \\[4pt]
        2q-6, & \text{if } q \equiv 1 \pmod{4}.
        \end{array}
        \right.
    \end{align}
\end{theorem}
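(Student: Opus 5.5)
The plan is to verify the conditions of Definition~\ref{4-local arc} directly in $\mathrm{AG}(2,q)$, as in the proof of Theorem~\ref{thm:evennumber}, using the symmetry of $\Sigma_u$ under $x\mapsto -x$ and $y\mapsto -y$. The count then comes from Lemma~\ref{lem:partialtransversal}.

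\textbf{Step 1: the size.} Put $h=\frac{q-1}{2}$, so that $-1=\omega^{h}$. The first block is $\{\omega^0,\dots,\omega^{h-1}\}$, and its negatives are exactly the second block. Hence for $i,i'$ in the first block we always have $i'\neq -i$, and the four points of each $\Sigma_u$ are distinct because $q$ is odd and $i,j\neq 0$.

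\begin{itemize}
\item If $q\equiv 3\pmod 4$, then $h$ is odd. Lemma~\ref{lem:partialtransversal} gives a transversal of length $h$, so $n=4h=2q-2$.
\item If $q\equiv 1 \pmod 4$, then $h$ is even and we get a partial transversal of length $h-1$, so $n=4(h-1)=2q-6$.
\end{itemize}

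Before using the lemma I will check that $L$ really has the form it requires. Write $i=\omega^a$ and $j=\omega^b$. The point $(i,j)$ lies on $\li(m,0)$ exactly when $\omega^{m-1}i+j=0$, i.e.\ $j/i=-\omega^{m-1}$. Taking $m$ and $m+h$ together, this says $j/i\in\{\pm\omega^{m-1}\}$. So $M(i,j)\equiv b-a+1 \pmod h$, which is the cyclic square of Lemma~\ref{lem:partialtransversal} with rows and columns indexed by the exponents. The key reformulation is
\[
M(i,j)=M(i',j') \iff j'/i'=\pm j/i .
\]

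\textbf{Step 2: each $\Sigma_u$ is a $4$-arc.} By Proposition~\ref{prop:point-line}, the six lines joining pairs of points of $\Sigma_u$ are the verticals $x=\pm i$, the horizontals $y=\pm j$, and the two lines through the origin, $y=(j/i)x$ and $y=-(j/i)x$. Each of these contains exactly two points of $\Sigma_u$. For example, $(-i,j)$ lies on $y=(j/i)x$ only if $j=-j$, which is impossible for odd $q$. Hence no three points of $\Sigma_u$ are collinear.

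\textbf{Step 3: points of a different block avoid these lines.} Take $v=(i',j')\in\Gamma$ with $v\neq u$, and a point $(\epsilon i',\delta j')$ of $\Sigma_v$ with $\epsilon,\delta\in\{\pm1\}$.

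\begin{itemize}
\item It is not on $x=\pm i$: that would force $i'=\pm i$, but $i'\neq i$ since $\Gamma$ is a partial transversal, and $i'\neq -i$ by the first-block observation.
\item It is not on $y=\pm j$, by the same argument applied to $j$.
\item It is not on $y=\pm(j/i)x$: that would mean $j'/i'=\pm j/i$, i.e.\ $M(i',j')=M(i,j)$, contradicting the distinct-symbols property of $\Gamma$ in $L$.
\end{itemize}

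Disjointness of $\Sigma_u$ and $\Sigma_v$ follows from the same inequalities $i'\neq\pm i$. Therefore $\Sigma_u\cup\Sigma_v$ is an $8$-arc in the affine part, hence in $\mathrm{PG}(2,q)$.

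The main obstacle is Step 1's identification of $M$ and the reformulation that equal symbols mean $j'/i'=\pm j/i$. That is where the transversal property is converted into avoidance of the two diagonal lines, which are the only non-axis-parallel lines. Once this is checked, the remaining steps are routine.
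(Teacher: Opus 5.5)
Your proposal is correct and follows the paper's proof essentially step for step. You count the points via Lemma~\ref{lem:partialtransversal}, list the six lines determined by $\Sigma_u$, and use the partial-transversal conditions together with $i'\neq -i$ and $j'\neq -j$ to keep $\Sigma_v$ off those lines. Your extra work (checking that $M(i,j)\equiv b-a+1 \pmod h$, noting that equal symbols mean $j'/i'=\pm j/i$, treating all four sign variants and disjointness explicitly) only fills in details that the paper asserts or covers by ``symmetry''.
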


\begin{proof}
    Write $r = \frac{q-1}{2}$. 
    By \cref{lem:partialtransversal}, the cyclic Latin square $L$ of order $r$ admits a transversal when $r$ is odd, and a partial transversal of length $r-1$ when $r$ is even. 
    Therefore, the size of $\bigcup_{u\in \Gamma} \Sigma_{u}$ is $4\left| \Gamma \right|$, which equals \eqref{eq:Aplomb}.

    It remains to prove it is a \MA. 
    Fix $u = (i,j) \in \Gamma$.   By \cref{prop:point-line}, the six lines determined by the four points of $\Sigma_u$ are
    \begin{align}\label{eq:jubilee}
       \mathscr{L}(0,j),\; \mathscr{L}(0,-j),\; 
        \mathscr{L}(\infty,i),\; \mathscr{L}(\infty,-i),\; 
        \mathscr{L}(m,0),\; \mathscr{L}\!\bigl(m+r,\,0\bigr),
    \end{align}
    where $m = M(i,j)$. 
    As a result, no three points of $\Sigma_u$ are collinear.
   Take another $v=(i',j')\in\Gamma$ with $v\neq u$.  We must show that no point of $\Sigma_v$ lies on any of the six lines \eqref{eq:jubilee} determined by $\Sigma_u$. By symmetry it suffices to check the point $v=(i',j')$ itself.
    Since $\Gamma$ is a partial transversal of $L$,  it follows that $i\neq i'$, $j\neq j'$ and $L(i,j) \neq L(i',j')$. Further notice that $i,i' \in \{ 1,\omega,\ldots,\omega^{\frac{q-1}{2}-1} \}$, thus, we have $i'\neq-i$. Analogously, we have $j'\neq -j$. 
    Above all, we can prove  that the point $v$ does not lie on any of the six lines from \eqref{eq:jubilee}.  This completes the proof. 
\end{proof}

The construction for odd $q$ follows the same blueprint as for even $q$; therefore we omit a detailed geometric explanation analogous to  \cref{ex:Zeitgeist}.
We present two concrete examples.
\begin{example}

For $q=7$, we have $n=2q-2=12$.  Following the construction in Section \ref{even}, we obtain a $4$-local arc consisting of three blocks: 
\[\mathcal{S}_1=\{\langle(1,1,1)\rangle,\langle(1,1,6)\rangle,\langle(1,6,1)\rangle,\langle(1,6,6)\rangle\},\]
\[\mathcal{S}_2=\{\langle(1,3,2)\rangle,\langle(1,3,5)\rangle,\langle(1,4,2)\rangle,\langle(1,4,5)\rangle\},\]
\[\mathcal{S}_3=\{\langle(1,2,3)\rangle,\langle(1,2,4)\rangle,\langle(1,5,3)\rangle,\langle(1,5,4)\rangle\}.\]
The corresponding parity-check matrix of a Singleton‑optimal $(12,6,6;3)$-LRC over $\mathbb{F}_7$ is 

\begin{equation*}
\setlength{\arraycolsep}{1.5pt}
\renewcommand{\arraystretch}{0.9}
H = \left(
\begin{array}{cccc|cccc|cccc}  
~1~&~1~&~1~&~1~&~0~&~0~&~0~&~0~&~0~&~0~&~0~&~0~\\0&0&0&0&1&1&1&1&0&0&0&0\\0&0&0&0&0&0&0&0&1&1&1&1\\
\hline0&6&6&0&0&5&5&0&0&2&2&0\\0&0&1&1&0&0&3&3&0&0&6&6\\0&1&0&1&0&1&0&1&0&4&0&4
\end{array}
\right).
\end{equation*}
\end{example}


\begin{example}
For $q=9$, we have $n=2q-6=12$. Let $\delta$  be a primitive element of $\F_9$ satisfying  $\delta^2+2\delta+2=0$. The construction produces the following three blocks:
\[\mathcal{S}_1=\{\langle(1,1,1)\rangle,\langle(1,1,2)\rangle,\langle(1,2,1)\rangle,\langle(1,2,2)\rangle\},\]
\[\mathcal{S}_2=\{\langle(1,\delta,\delta^2)\rangle,\langle(1,\delta,\delta^6)\rangle,\langle(1,\delta^5,\delta^2)\rangle,\langle(1,\delta^5,\delta^6)\rangle\},\]
\[\mathcal{S}_3=\{\langle(1,\delta^3,\delta)\rangle,\langle(1,\delta^3,\delta^5)\rangle,\langle(1,\delta^7,\delta)\rangle,\langle(1,\delta^7,\delta^5)\rangle\}.\]
The corresponding parity-check matrix of a Singleton‑optimal $(12,6,6;3)$-LRC over $\mathbb{F}_9$ is 

\begin{equation*}
\setlength{\arraycolsep}{1.5pt}
\renewcommand{\arraystretch}{0.9}
H = \left(
\begin{array}{cccc|cccc|cccc}  
~1~ & ~1~ & ~1~ & ~1~ & ~0~ & ~0~ & ~0~ & ~0~ & ~0~ & ~0~ & ~0~ & ~0~ \\
0 & 0 & 0 & 0 & 1 & 1 & 1 & 1 & 0 & 0 & 0 & 0 \\
0 & 0 & 0 & 0 & 0 & 0 & 0 & 0 & 1 & 1 & 1 & 1 \\
\hline
0 & 1 & 1 & 0 & 0 & \delta^6 & \delta^6 & 0 & 0 & 1 & 1 & 0 \\
0 & 0 & 2 & 2 & 0 & 0 & \delta & \delta & 0 & 0 & \delta & \delta \\
0 & 2 & 0 & 2 & 0 & 1 & 0 & 1 & 0 & \delta^3 & 0 & \delta^3
\end{array}
\right).
\end{equation*}
\end{example}

\begin{proof}[Proof of Theorem \ref{main}]
    Theorem \ref{main} asserts the existence of Singleton-optimal $(n,k,6;3)$-LRCs for every prime power $q \geq 4$, with the length $n$ given by three cases according to the parity of $q$ and its residue modulo $4$. For even $q = 2^r$ ($r \ge 2$), Theorem \ref{thm:evennumber} provides a $4$-local arc in $\mathrm{PG}(2,q)$ of size $2q$, which, by Remark \ref{rem:equivalence}, yields a code of length $n = 2q$. For odd $q$, we distinguish two subcases. If $q \equiv 3 \pmod{4}$, Theorem \ref{thm:thimble} gives a $4$-local arc of size $2q-2$; if $q \equiv 1 \pmod{4}$, it gives a $4$-local arc of size $2q-6$. In both subcases, Remark \ref{rem:equivalence} again translates the geometric object into a Singleton-optimal LRC with the claimed length. This completes the proof.
\end{proof}

\section{Conclusion and discussion}\label{conclusion}

In \cite[Theorem 5]{fang2024}, the following upper bound was established for the length of a $q$-ary Singleton-optimal LRC with $d=6$ and $r=3$ (assuming $4 \mid n$):
\[
n \leq 4 \left\lfloor \frac{7q + 3 + \sqrt{24q^3 + q^2 - 6q - 63}}{24} \right\rfloor = O(q^{1.5}).
\]
The constructions presented in Theorems \ref{thm:evennumber} and \ref{thm:thimble} achieve lengths that grow linearly in $q$, which is considerably below this asymptotic bound. 
Moreover, when $q=9$, our method yields $n=12$, whereas the upper bound evaluates to $16$ (see \cite[Example~1]{fang2024}). 
This substantial gap suggests that either the known upper bound could be improved, or new constructions approaching the bound may exist. Investigating this discrepancy is therefore an interesting direction for future work.

A natural combinatorial extension of our work is to replace the $4$-arc blocks by $k$-arc blocks for a fixed integer $k \ge 4$, while  requiring that the union of any two distinct blocks forms an $2k$-arc.  
We call such a configuration a \emph{$k$-local arc} in $\operatorname{PG}(2,q)$.

Several purely geometric and combinatorial questions arise immediately:
\begin{itemize}
    \item \textbf{Existence and construction:} For which pairs $(q,k)$ do $k$-local arcs exist? Can one give explicit constructions?
    \item \textbf{Maximal size:} What is the maximum possible number of blocks (or equivalently, the maximum total number of points) in a $k$-local arc? How does this maximum grow with $q$ and $k$?
     \item \textbf{Potential coding-theoretic implications:} The geometric study of $k$-local arcs could inspire the construction of new linear codes. Exploring this connection remains an open direction at the intersection of finite geometry and coding theory.
\end{itemize}
Studying $k$-local arcs therefore offers a natural next step in understanding the combinatorial landscape of arc-type configurations in projective planes, independent of their potential coding-theoretic applications.

Finally, LRCs with different parameters $d$ and $r$ correspond to various geometric objects.  For example, \cite[Theorem 10]{Tao2025} shows that the existence of a $q$-ary Singleton-optimal $(n=\ell(\delta+1), k, d=2\delta+2, r=2, \delta)$-LRC with disjoint local repair groups is equivalent to the existence of a set of points in $\mathrm{PG}(2,q)$ with certain properties on secant and tangent lines.  Exploring the connections between LRCs with various parameters and finite geometric objects is also worth further attention.

\section{Acknowledgements}
We are grateful to Peter Cameron and Rosemary Bailey for teaching us a useful fact on mutually orthogonal Latin squares.  The first author is supported by the National Natural Science Foundation of China (No. 12501502) and Innovation Research Foundation of College of Science at National University of Defense Technology (202501-YJRC-LXY-01).  The second author is supported by the National Natural Science Foundation of China (No. 12501467) and Innovation Research Foundation of College of Science at National University of Defense Technology (202501-YJRC-LXY-02).


\end{document}